\newtheorem{theorem}{Theorem}[section]
\newtheorem{proposition}[theorem]{Proposition}
\let\old@ps@headings\ps@headings
\let\old@ps@IEEEtitlepagestyle\ps@IEEEtitlepagestyle
\def\psccfooter#1{%
    \def\ps@headings{%
        \old@ps@headings%
        \def\@oddfoot{\strut\hfill#1\hfill\strut}%
        \def\@evenfoot{\strut\hfill#1\hfill\strut}%
    }%
    \def\ps@IEEEtitlepagestyle{%
        \old@ps@IEEEtitlepagestyle%
        \def\@oddfoot{\strut\hfill#1\hfill\strut}%
        \def\@evenfoot{\strut\hfill#1\hfill\strut}%
    }%
    \ps@headings%
}
        \parbox{\textwidth}{\hrulefill \\ \small{24th Power Systems Computation Conference} \hfill \begin{minipage}{0.2\textwidth}\centering \vspace*{4pt} \includegraphics[scale=0.06]{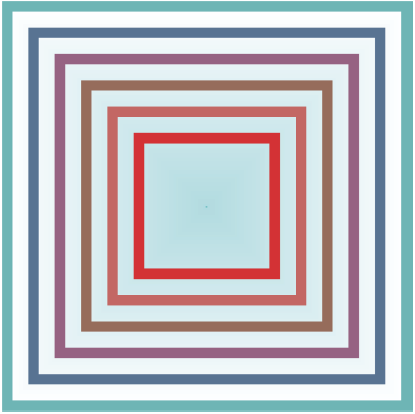}\\\small{PSCC 2026} \end{minipage} \hfill \small{Limassol, Cyprus --- June 8-12, 2026}}%
\begin{document}
%
\title{Dispatchable Current Source  Virtual Oscillator Control Achieving Global Stability}

\author{
\IEEEauthorblockN{Kehao Zhuang, Linbin Huang, Huanhai Xin}
\IEEEauthorblockA{College of Electrical Engineering\\Zhejiang University\\Hangzhou, China\\
\{zhuangkh, hlinbin, xinhh\}@zju.edu.cn}
\and
\IEEEauthorblockN{Xiuqiang He}
\IEEEauthorblockA{Department of Automation\\Tsinghua University\\
Beijing, China\\
\{hxq19\}@tsinghua.org.cn}
\and
\IEEEauthorblockN{Verena Häberle, Florian Dörfler}
\IEEEauthorblockA{Automatic Control Laboratory\\ETH Zürich\\
Zürich, Switzerland\\
\{verenhae, dorfler\}@ethz.ch}
}


\maketitle

\begin{abstract}
This work introduces a novel dispatchable current source virtual oscillator control (dCVOC) scheme for grid-following (GFL) converters, which exhibits duality with dispatchable virtual oscillator control (dVOC) in two ways: a) the current frequency is generated through reactive power control, similar to a PLL ; b) the current magnitude reference is generated through active power control. We formally prove that our proposed control always admits a steady-state equilibrium and ensures global stability under reasonable conditions on grid and converter parameters, even when considering LVRT and current saturation constraints. Our approach avoids low-voltage transients and weak grid instability, which is not the case for conventional GFL control. The effectiveness of our proposed control is verified through high-fidelity electromagnetic transient simulations.
\end{abstract}

\begin{IEEEkeywords}
Grid-following, duality, dispatchable current source virtual oscillator control, global stability, low voltage ride through, current saturation.
\end{IEEEkeywords}

\thanksto{\noindent Submitted to the 24th Power Systems Computation Conference (PSCC 2026).}

\section{Introduction}

With the increasing renewable energy sources integrated into the power grid through phase-locked loop (PLL)-based grid-following (GFL) converters~\cite{20}, synchronous generators (SGs) are gradually being displaced, resulting in weak grid characteristics with low short circuit ratio (SCR)~\cite{10:cigre}. Unfortunately, GFL converters are highly prone to instability under such weak grid conditions~\cite{Huqi:torque_method}. Numerous instability incidents involving converters have been reported worldwide, often leading to large-scale tripping events~\cite{NERC}. Under small disturbances, GFL converters typically exhibit instability manifested as sub- and super-synchronous oscillations~\cite{kehao:dual_axis}. Under large disturbances, GFL converters often lose synchronism or experience voltage collapse due to the absence of the equilibrium point~\cite{multiPLLtransient}. Therefore, addressing these challenges by enhancing GFL control to ensure stable power delivery remains a critical research focus. 

A common mitigation strategy for suppressing oscillation is to employ auxiliary control loops and lead-lag compensators to selectively shape the converter’s phase margin~\cite{xiongfei:syn_overview,Linbin:PLL_sy}. However, since these oscillations span a wide frequency range and their dominant frequencies vary with grid conditions, phase compensators designed for specific frequency bands may fail to provide universally effective mitigation~\cite{xiongfei:syn_overview}. 



In recent years, various transient enhancement control strategies have also been proposed for GFL converters. These include reducing active current proportionally to the percentage of the frequency variations of PLL, adjusting active and reactive currents to match line impedance angle for improved stability, and freezing the PLL integrator to increase damping during transients, etc \cite{xiongfei:syn_overview}. Although these approaches have demonstrated effectiveness in numerous case studies, none can theoretically guarantee the global stability of GFL converters. Moreover, during low voltage ride through (LVRT) events, grid codes impose specific requirements on reactive current support~\cite{lvrt:IEC}. Therefore, transient enhancement control must ensure both stability and compliance with grid code requirements. 

In contrast to GFL converters, grid-forming (GFM) converters exhibit low sensitivity to grid strength and maintain excellent small-signal (local) stability even in grids with very low SCR \cite{xiongfei:syn_overview}. However, conventional droop-controlled converters that directly emulate synchronous machines still suffer from transient stability issues and often perform poorly under large disturbances \cite{Kehao:saturation}. To address this, dispatchable virtual oscillator control (dVOC) has been proposed \cite{M.colombino:dvoc}. dVOC offers remarkable advantages in synchronization and voltage stability by introducing a tunable power coupling angle and a nonlinear voltage term \cite{hybrid_angle:tayyebi}, which together ensure almost global stability \cite{Xiuqiang:crossforming}. Its dynamic response not only surpasses that of conventional droop control but also removes the restrictive assumptions of purely inductive networks, fixed resistance to inductance ratios \cite{grob:dvoc}, or constant voltages \cite{xiuqiangdvoc:conletter}. The effectiveness of dVOC has been extensively validated through numerous experimental studies \cite{Hyu:dvoc,dvoc_xiuqiang:tcns}. 

\begin{table*}
\centering
\caption{The comparison of GFM and GFL}
\footnotesize
\begin{tabularx}{\linewidth}{
    >{\centering\arraybackslash}m{3cm}
    >{\centering\arraybackslash}m{4.8cm}
    >{\centering\arraybackslash}m{9cm}}
\hline\hline
 & \multicolumn{1}{c}{GFM droop control} & \multicolumn{1}{c}{GFL control} \\ 
\arrayrulecolor{gray!50}  \cline{1-3}  
\multicolumn{1}{c}{Controller} & \multicolumn{1}{c}{$
    
\begin{aligned}
\dot{\theta}_u&=\omega_0+k_p\left(p_{\rm ref}-p\right)\,,\\
\dot{u}&=k_p\left(q_{\rm ref}-q\right)+k_v\left(u_{\rm ref}-u\right)\,.
\end{aligned}$} & \multicolumn{1}{c}{$\begin{aligned}
\dot{\theta}_i&=\omega_0+k_{\rm pllp}u_q+k_{\rm plli}\int{u_qdt}\,,\\
{i}&=k_{\rm pp}\left(p_{\rm ref}-p\right)+k_{\rm pi}\int{\left(p_{\rm ref}-p\right)dt}\,.
\end{aligned}$
}\\
\cline{1-3}  
\multirow{2}{*}{Parameters} & $k_p \in \mathbb{R}_{>0}$: power droop gain, $k_v \in \mathbb{R}_{>0}$: voltage droop gain. & $k_{\rm pllp},k_{\rm plli} \in \mathbb{R}_{>0}$: proportional and integral coefficient of the PLL, $k_{\rm pp},k_{\rm pi} \in \mathbb{R}_{>0}$: proportional and integral coefficient of active power control.\\
\cline{2-3}
& \multicolumn{2}{>{\centering\arraybackslash}p{13.8cm}}{$\omega_0$: nominal frequency; \; $\theta_u,u$: voltage angle and magnitude at the converter terminal; \; $\theta_i,i$: output current angle and magnitude of the converter; \;$p,q$: active power and reactive power of the converter; \;subscript $_{\rm ref}$: set reference value; \;subscript $_{d},_{q}$: $d$- or $q$-axis component  in converter's local $dq$ frame component.}\\
 \cline{1-3}  
Source type & Controllable voltage source that actively provides voltage and frequency support.  &  Controllable current source delivering stable active power while passively following the grid.\\ 
\cline{1-3}  
{Frequency response} & Active power-voltage frequency droop  &   Reactive power-current frequency PI control at unity power factor ($q=u_qi$) \\
 \cline{1-3}  
{Magnitude adjustment} & Voltage magnitude response to reactive power variation  & Current magnitude response to active power variation\\
 \cline{1-3}  
{Stability} & Stable across wide grid strength, even in low-SCR weak grids.  &  Prone to instability in low-SCR weak grids. \\
\arrayrulecolor{black}\hline\hline
\end{tabularx}
\label{Table-gflgfm}
\end{table*}

\begin{figure*}
\centerline{\includegraphics[scale=0.7]{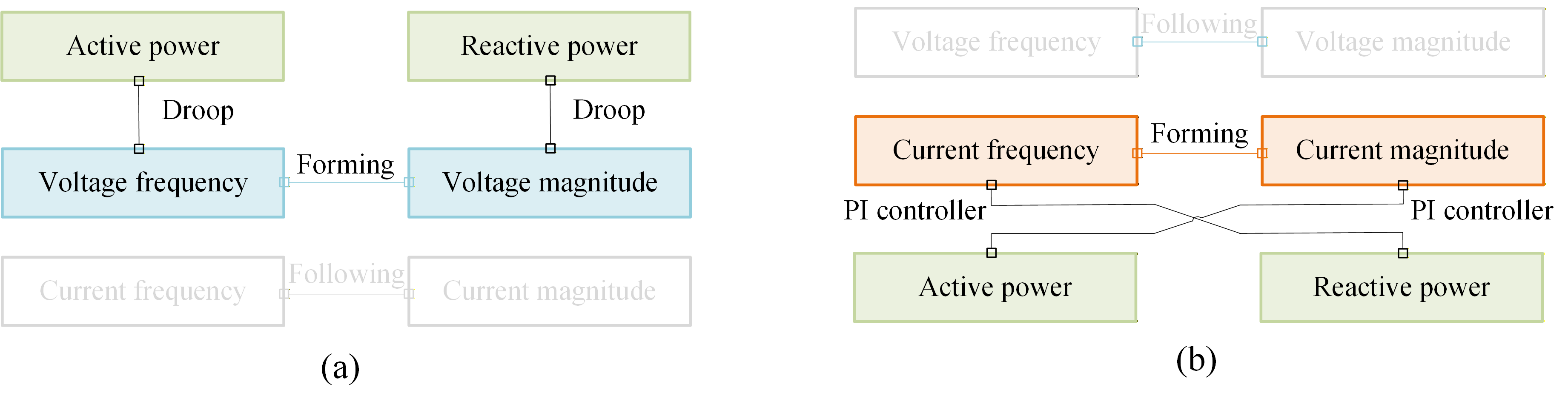}}
        \caption{The duality between GFM and GFL (a) GFM (voltage-forming and current-following), (b) GFL (voltage-following and current-forming).}
        \label{dual}
\end{figure*}

However, replacing all GFL converters with GFM would incur prohibitively high costs. Consequently, GFL converters are expected to remain a long-term component of future power systems. This motivates the development of a robust GFL control strategy. To address the weak-grid instability in GFL converters, we observe that, by leveraging the duality between GFM and GFL modes~\cite{duality:gu}, the dVOC~\cite{3:XIUQIANGDVOC}, which exhibits near-global stability, can be seamlessly extended to a GFL framework. This approach is compatible with conventional GFL control architectures and does not require complex control structures. We term this control strategy {\em dispatchable current source virtual oscillator control (dCVOC)}, with the specific contributions summarized as follows.

Under certain parameter settings, dCVOC achieves stable synchronization via a PLL and regulates active power through current magnitude control, ensuring stable power delivery under steady-state conditions. Moreover, we demonstrate that, regardless of grid conditions, dCVOC always possesses an equilibrium point and can achieve global asymptotic stability, even when accounting for LVRT grid code requirements and current saturation. These parameter conditions are easily satisfied and do not require detailed knowledge of the grid. Furthermore, the effectiveness of the proposed dCVOC is validated through EMT simulations using a detailed switching model in MATLAB/Simulink.

\section{The Control Architecture of dCVOC} \label{sec1}
\subsection{The Duality of GFL and GFM }

We take classical droop control and PLL-based control as representative examples of GFM and GFL, respectively, to illustrate their duality. The dynamic equations of the GFM and GFL controls are given in Table~\ref{Table-gflgfm}, respectively. 

{\bf Duality between GFM and GFL modes:} Traditionally, studies on GFM and GFL converters have primarily focused on the voltage and current source perspectives. More recently, a complementary viewpoint based on duality has been introduced, emphasizing that the GFM mode can be regarded as {\em voltage-forming} and {\em current-following}, whereas the GFL mode can be regarded as {\em current-forming} and {\em voltage-following}~\cite{duality:gu,Xiuqiang:crossforming}. Specifically, GFM independently regulates the voltage magnitude and frequency (angle), while its current depends on the external grid. In contrast, GFL independently regulates the current magnitude and frequency (angle), while its voltage depends on the external grid. Given the inherent duality between voltage and current, GFM and GFL naturally exhibit both structural and functional duality, as shown in Fig.~\ref{dual}.

{\bf Dual frequency response:} A GFM converter establishes an active power–frequency droop relationship. When a load disturbance induces a grid frequency deviation, this droop mechanism enables the converter to provide active power support, thereby contributing to frequency stabilization. In contrast, a GFL converter determines its current frequency from $u_q$ via a high-bandwidth PLL, effectively tracking the grid frequency. Typically operating at unity power factor ($i_q=0,q=u_qi_d-u_di_q=u_qi_d\propto u_q$), the GFL converter can therefore be equivalently interpreted as regulating reactive power to track the grid frequency. 

{\bf Dual magnitude adjustment:} A GFM converter regulates its voltage magnitude through variations in reactive power and incorporates a voltage regulation term $k_v\left(u_{\rm ref}-u\right)$ to support system voltage. In contrast, a GFL converter adjusts its current magnitude via changes in active power, ensuring stable active power transfer.

GFM converters are insensitive to grid strength and can maintain stable operation under weak grid conditions. Building upon droop control, the dVOC scheme has been proposed to further enhance stability, ensuring near-global convergence of GFM converters during transient events. In contrast, conventional GFL converters are prone to instability in weak grid. By leveraging the duality between GFM and GFL modes, we aim to develop an advanced GFL control strategy with stability characteristics (near-global stability) similar to those of dVOC. 

\subsection{From dVOC to dCVOC}
We first revisit the dVOC~\cite{3:XIUQIANGDVOC} and then derive its dual formulation by exploiting the voltage–current duality. The dVOC dynamics are  given by
\begin{equation} \label{eq3:dVOC}
\begin{aligned}
\dot{\theta}_u&=\omega_0+k_p\left(p^{\rm ref}_\varphi-p_\varphi\right)\,,\\
\frac{\dot{u}}{u}&=k_p\left(q^{\rm ref}_\varphi-q_\varphi\right)+k_v\frac{u^2_{\rm ref}-u^2}{u^2_{\rm ref}}\,,
\end{aligned}
\end{equation}
where, $p^{{\rm ref}}_\varphi+{\rm j}q^{{\rm ref}}_\varphi=e^{{\rm j}(\pi/2-\varphi)}(p_{\rm ref}+{\rm j}q_{\rm ref})/u^2_{\rm ref}$, and $p_\varphi+{\rm j}q_\varphi=e^{{\rm j}(\pi/2-\varphi)}(p+{\rm j}q)/u^2$. The angle $\varphi$ is used to rotate the power to match the  the network impedance characteristics. In an inductive-dominated network, set $\varphi=\pi/2$, consistent with classical droop control. In addition, the nonlinear term $1/u^2$ plays a crucial role in ensuring stability. 

Following the duality framework presented in Section~\ref{sec1}, the mapping between the control inputs and the magnitude–frequency outputs is interchanged. The proposed dCVOC is shown in Fig.~\ref{control}, and its dynamics are given by
\begin{equation} \label{eq4:dCVOC}
\begin{aligned}
\dot{\theta}_i&=\omega_0+k_p\left(q_\varphi -q^{\rm ref}\right)+k_{\rm plli}\int{i \left(q_\varphi-q^{\rm ref}_\varphi\right)dt}\,,\\
\frac{\dot{i}}{i}&=k_p\left(p_\varphi-p^{\rm ref}_\varphi\right)\,,
\end{aligned}
\end{equation}
where $p^{\rm ref}_\varphi+{\rm j}{q^{\rm ref}_\varphi}=e^{j(\pi/2-\varphi)}(p_{\rm ref}+{\rm j}q_{\rm ref})/i^2_{\rm ref}$, $p_\varphi+j{q_\varphi}=e^{j(\pi/2-\varphi)}(p+jq)/i^2$, and $\varphi$ carries the same meaning as in Eq.~\eqref{eq3:dVOC}.Dual to dVOC, dCVOC also includes a nonlinear term, namely the power divided by the square of the current $i^2$. 
\begin{figure}
\centerline{\includegraphics[width=0.9\linewidth]{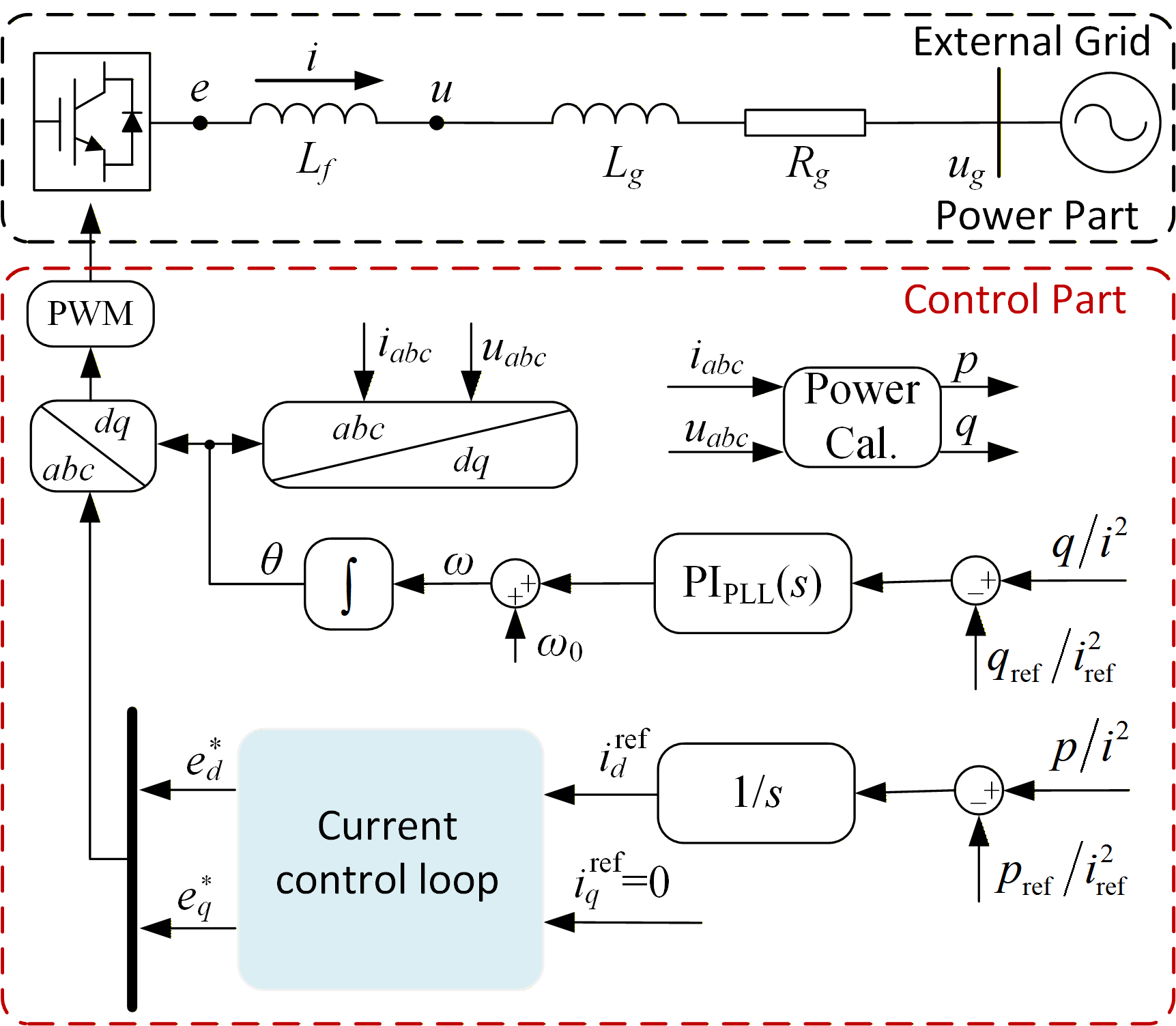}}
        \caption{The control architecture of dCVOC}
        \label{control}
\end{figure}

The proposed dCVOC is formulated by exploiting the duality illustrated in Fig.~\ref{dual}, where $q_\varphi$ (i.e., reactive power divided by current magnitude for $\varphi=\pi/2$) governs the current frequency and $p_\varphi$ (i.e., active power divided by current magnitude for $\varphi=\pi/2$) governs current magnitude, thereby extending the dVOC concept to a GFL context. 

The proposed dCVOC is also compatible with conventional GFL control architectures. Setting $\varphi=\pi/2$ and $q_{\rm ref}=0$ yields $q_\varphi=u_q/i$. In this configuration, both dCVOC and conventional GFL control rely on a PLL (PI controller) for synchronization, with $u_q=0$ at steady state. The difference is that dCVOC incorporates a variable gain of $k_p/i$. In both cases, active power is regulated through current magnitude, but dCVOC provides enhanced stability and adaptability due to its variable-gain structure, which is proved in Section~\ref{sec3}.

\section{The Stability Analysis of dCVOC} \label{sec3}
\subsection{The Existence of Equilibrium Points and Global Stability}
We consider a single converter connected to the grid modeled as a stiff voltage source, as shown in Fig.~\ref{control}. It will be convenient to analyze \eqref{eq4:dCVOC} in different coordinates aligned with those chosen for the dVOC analysis. Taking the grid voltage angle $\theta_g$ as the reference (i.e., $\theta_g=0{\rm rad}$ and $\omega_g=\dot{\theta_g}$) and letting $\vec{i}=i_{\alpha}+{\rm j}i_{\beta}=ie^{j\delta_i}$ ($\alpha\beta$ is a global frame) with ${\delta}_i={\theta_i}-{\theta_g}$, then we have
\begin{equation} \label{eq5:vector}
\begin{aligned}
\dot{\vec{i}}=e^{j\delta_i}\dot{i}+{\rm j}\left(ie^{j\delta_i}\dot{\delta_i}\right)=\left(\dfrac{\dot{i}}{i}+{\rm j}\dot{\delta_i}\right)\vec{i}\,.
\end{aligned}
\end{equation}

Due to $\dot{\delta}_i=\dot{\theta}_i-\omega_g$, substituting \eqref{eq4:dCVOC} into \eqref{eq5:vector} and eliminating $\dot{i}/i,\dot{\delta}$ yields
\begin{equation} \label{eq5:dvoc_vector}
\begin{aligned}
\dot{\vec{i}}=k_p\left[\left(p_{\varphi}-p^{\rm ref}_{\varphi}\right)+{\rm j}\left(q_{\varphi}-q^{\rm ref}_{\varphi}+\omega_\Delta \right)\right]\vec{i}\,,
\end{aligned}
\end{equation}
where, $\omega_\Delta=\frac{1}{k_p}\left(k_{\rm plli}\int{i\left(q_\varphi-q^{\rm ref}_\varphi\right)dt}+\omega_0-\omega_g\right)$, $\omega_0$ and $\omega_g$ are both constants.

The output power is 
\begin{equation} \label{eq5:power}
\begin{gathered}
p+{\rm j}q=(u_\alpha+{\rm j}u_\beta)\vec{i}^*\,,\\
(p_\varphi+{\rm j}q_{\varphi})\vec{i}=e^{{\rm j}\left(\frac{\pi}{2}-\varphi\right)}\dfrac{p+{\rm j}q}{\vec{i}^*}=e^{{\rm j}\left(\frac{\pi}{2}-\varphi\right)}(u_{\alpha}+{\rm j}u_\beta)\,,
\end{gathered}
\end{equation}
where $^*$ denotes the complex conjugate, $u_\alpha+{\rm j}u_\beta=ue^{j\delta_u}$ with $\delta_u=\theta_u-\theta_g$.

For a complex number $a+{\rm j}b$, it can be represented in vector $\begin{bmatrix} a & b\end{bmatrix}^\top$ or matrix $\left[\begin{smallmatrix} a & -b \\ b & a\end{smallmatrix}\right]$ (Conversely, $\vec{()}$ represents the complex form of the above matrix or vector). Therefore, substituting \eqref{eq5:power} into \eqref{eq5:dvoc_vector} yields
\begin{equation} \label{eq3:dCVOC_xy}
\left\{
\begin{aligned}
\begin{bmatrix} \dot{i_\alpha} \\ \dot{i_\beta} \end{bmatrix}
&= k_p \left({\bm S}_\varphi-{\bm S}^{\rm ref}_\Delta\right)
\begin{bmatrix} i_\alpha \\ i_\beta \end{bmatrix} \\[0.8em]
&= k_p \left( e^{J(\pi/2-\varphi)}
\begin{bmatrix} u_\alpha \\ u_\beta \end{bmatrix}
- {\bm S}^{\rm ref}_\Delta
\begin{bmatrix} i_\alpha \\ i_\beta \end{bmatrix} \right) , \\[0.8em]
\dot{\omega}_\Delta
&= \dfrac{k_{\rm plli}}{k_p}i(q_\varphi - q^{\rm ref}_\varphi)\,,
\end{aligned}
\right.
\end{equation}
where, ${\bm S}_\varphi:=\left[\begin{smallmatrix} p_\varphi & -q_\varphi  \\ q_\varphi&  p_\varphi \end{smallmatrix}\right]\,,{\bm S}^{\rm ref}_{\Delta}:=\left[\begin{smallmatrix} p^{\rm ref}_\varphi & -\left(q^{\rm ref}_\varphi-\omega_\Delta \right)  \\ q^{\rm ref}_\varphi-\omega_\Delta & p^{\rm ref}_\varphi \end{smallmatrix}\right]$, $e^{J\theta}=\left[\begin{smallmatrix} \cos\theta & -\sin\theta\\ \sin\theta & \cos\theta \end{smallmatrix}\right]$ .

The network electromechanical equations are given as
\begin{equation} \label{eq3:line}
\begin{aligned}
\begin{bmatrix} {u_\alpha} \\ {u_\beta} \end{bmatrix}-
\begin{bmatrix} {u_g} \\ 0 \end{bmatrix}=Z_ge^{J\varphi_g}
\begin{bmatrix} {i_\alpha} \\ {i_\beta} \end{bmatrix}
\,,
\end{aligned}
\end{equation}
where, $u_g$ is the grid voltage magnitude, $Z_g=\sqrt{R^2_g+L^2_g}$ is the line impedance magnitude, and $\varphi_g$ is the impedance angle. If $\varphi_g$ is known, one typically sets $\varphi = \varphi_g$ to compensate for the impedance angle. 

Combining \eqref{eq5:power} to~\eqref{eq3:line}, and eliminating $q_{\varphi}, \left[u_\alpha\;\;u_\beta\right]^\top$, the interconnected converter and network dynamics are expressed as
\begin{equation} \label{eq3:close}
\begin{cases}
\begin{bmatrix} \dot{i_\alpha} \\ \dot{i_\beta} \end{bmatrix}=
k_p\left(\left(Z_ge^{J(\frac{\pi}{2}+\varphi_g-\varphi)}-{\bm S}^{\rm ref}_\Delta\right)\begin{bmatrix} {i_\alpha} \\ {i_\beta} \end{bmatrix}+u_{g,\varphi}\right)
 \\
 \dot{\omega}_\Delta=\dfrac{k_{\rm plli}}{k_p}i\left(\Im{\left(\dfrac{\vec{u}_{g,\varphi}\vec{i}^*}{i^2}\right)}+Z_g\sin(\frac{\pi}{2}+\varphi_g-\varphi)- q^{\rm ref}_\varphi\right)
\,,
\end{cases}
\end{equation}
where $i_{\alpha\beta}=\begin{bmatrix}  i_\alpha & i_\beta  \end{bmatrix}^\top$, $u_{g,\varphi}=e^{J(\frac{\pi}{2}-\varphi)}\begin{bmatrix} {u_g} & 0 \end{bmatrix}^\top$, and $\Im$ denotes imaginary part.

By setting the differential equations in ~\eqref{eq3:dCVOC_xy} and~\eqref{eq3:close} to zero, the unique equilibrium point can be obtained,
\begin{equation} \label{eq3:equlibrium}
\begin{cases}
\begin{bmatrix} {i_{\alpha,s}} \\ {i_{\beta,s}} \end{bmatrix}=-\left(e^{J\varphi_g}Z_g-{\bm S}_{\rm ref}\right)^{-1}\begin{bmatrix} {u_g} \\ 0 \end{bmatrix}\,,\\
\omega_{\Delta,s}=0
\,,
\end{cases}
\end{equation}
where the subscript $_{s}$ denote the steady value at the equilibrium point, ${\bm S}_{\rm ref}:=\frac{1}{i^2_{\rm ref}}\left[\begin{smallmatrix} p_{\rm ref} & -q_{\rm ref} \\ q_{\rm ref} & p_{\rm ref}\end{smallmatrix}\right]$. And~\eqref{eq3:equlibrium} holds provided that the matrix $e^{J\varphi_g}Z_g-{\bm S}_{\rm ref}$ is invertible, a condition that is readily satisfied, ensuring the existence of a steady-state equilibrium.

At the equilibrium point, substituting \eqref{eq3:line} and \eqref{eq3:equlibrium} into \eqref{eq5:power} yields the steady-state output power and current magnitude,
\begin{equation} \label{eq3:equiPQ}
\begin{aligned}
p_s+jq_s&=u^2_s\frac{i^2_{\rm ref}(p_{\rm ref}+{\rm j}q_{\rm ref})}{p^2_{\rm ref}+q^2_{\rm ref}}\,,\\
i_s&=u_s\dfrac{i^2_{\rm ref}}{\sqrt{p^2_{\rm ref}+q^2_{\rm ref}}}
\,.
\end{aligned}
\end{equation}

If $i_{\rm ref}=\sqrt{p^2_{\rm ref}+q^2_{\rm ref}}$, \eqref{eq3:equiPQ} shows that under steady-state or small-disturbance conditions with $u_s\approx 1$pu, dCVOC delivers power according to the prescribed reference values. During large disturbances such as short-circuit faults, the terminal voltage drops, and dCVOC adaptively decreases its steady-state current magnitude $i_s=u_si_{\rm ref}$ in proportion to the voltage depression, thereby maintaining the existence of a feasible power flow solution.


Eq.~\eqref{eq3:equlibrium} provides the unique equilibrium point, and the parameter conditions ensuring its existence and global asymptotic stability can be further established. 

\begin{proposition}[{Global stability}]\label{globalstability}
Assume that \eqref{eq3:conditionstability} holds and there is $\epsilon_s$ so that $0<\epsilon<\epsilon_s$, where $\epsilon=k_{\rm plli}/k^2_p$. Then the unique equilibrium $(\vec{i}_{\alpha\beta,s},\omega_{\Delta,s})$ always exists and is globally asymptotically stable.
\begin{equation} \label{eq3:conditionstability}
\begin{aligned}
\sqrt{p^2_{\rm ref}+q^2_{\rm ref}}\cos{\left(\frac{\pi}{2}-\varphi\right)}>i^2_{\rm ref}Z_g\cos\left(\frac{\pi}{2}+\varphi_g-\varphi\right)
\,,  
\end{aligned}
\end{equation}
\end{proposition}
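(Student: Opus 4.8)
My plan is to treat the closed loop \eqref{eq3:close} as a singularly perturbed system in the small parameter $\epsilon=k_{\rm plli}/k_p^2$, with the current vector $\vec{i}_{\alpha\beta}$ as the fast state and $\omega_\Delta$ as the slow state (note $k_{\rm plli}/k_p=\epsilon k_p$), and then to glue a boundary-layer (fast) Lyapunov function to a reduced (slow) Lyapunov function into a composite certificate valid for all $\epsilon<\epsilon_s$. Existence and uniqueness of the equilibrium I would dispatch first: setting the right-hand sides of \eqref{eq3:close} to zero and using $\vec{i}\neq 0$ forces the imaginary-part balance $q_\varphi-q^{\rm ref}_\varphi+\omega_\Delta=0$ together with $q_\varphi=q^{\rm ref}_\varphi$, hence $\omega_{\Delta,s}=0$, after which the network relation \eqref{eq5:power} yields the closed form \eqref{eq3:equlibrium}. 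The invertibility of $e^{J\varphi_g}Z_g-\bm{S}_{\rm ref}$ required there is implied by \eqref{eq3:conditionstability}, since that matrix equals $e^{J(\pi/2-\varphi)}$ times the (Hurwitz, hence nonsingular) fast-subsystem matrix evaluated at $\omega_\Delta=0$.

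The fast subsystem is the decisive structural simplification. Freezing $\omega_\Delta$, the $\vec{i}_{\alpha\beta}$-dynamics in \eqref{eq3:close} is affine, $\dot{\vec{i}}_{\alpha\beta}=A(\omega_\Delta)\vec{i}_{\alpha\beta}+k_p u_{g,\varphi}$ with $A(\omega_\Delta)=k_p(Z_g e^{J(\pi/2+\varphi_g-\varphi)}-\bm{S}^{\rm ref}_\Delta)$. Because $\omega_\Delta$ enters $\bm{S}^{\rm ref}_\Delta$ only through its skew-symmetric (off-diagonal) part, the eigenvalue real parts of $A$ are independent of $\omega_\Delta$ and equal $a:=k_p\!\left(Z_g\cos(\pi/2+\varphi_g-\varphi)-p^{\rm ref}_\varphi\right)$, which \eqref{eq3:conditionstability} renders strictly negative. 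Thus $A(\omega_\Delta)$ is Hurwitz uniformly in $\omega_\Delta$, the quasi-steady manifold $\vec{i}_s(\omega_\Delta)=-A(\omega_\Delta)^{-1}k_p u_{g,\varphi}$ is globally well defined and never zero, and — since $A^\top+A=2aI$ is a scalar multiple of the identity — the Lyapunov equation is solved by the constant matrix $P=\tfrac{1}{2|a|}I$, giving the fast certificate $V_f=\vec{e}^\top P\vec{e}$ with $\vec{e}=\vec{i}_{\alpha\beta}-\vec{i}_s(\omega_\Delta)$ and a contraction rate uniform in $\omega_\Delta$.

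Next I would analyze the reduced dynamics on the manifold, which collapses dramatically. On $\{\vec{e}=0\}$ the current equation forces $q_\varphi-q^{\rm ref}_\varphi=-\omega_\Delta$, so substituting into the $\omega_\Delta$-equation of \eqref{eq3:close} gives the scalar reduced system $\dot\omega_\Delta=-\epsilon k_p\, i_s(\omega_\Delta)\,\omega_\Delta$. As $i_s(\omega_\Delta)=\|\vec{i}_s(\omega_\Delta)\|>0$ for every $\omega_\Delta$, the quadratic $W=\tfrac12\omega_\Delta^2$ satisfies $\dot W=-\epsilon k_p\, i_s(\omega_\Delta)\,\omega_\Delta^2<0$ off the origin, so $\omega_{\Delta,s}=0$ is globally asymptotically stable for the reduced flow, with $i_s(\omega_\Delta)\sim c/|\omega_\Delta|$ at infinity keeping $\dot\omega_\Delta$ bounded away from zero and thereby ruling out escape.

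The remaining and hardest step is to combine these globally. I would form $V=W+\rho V_f$ for a suitable weight $\rho>0$ and bound the two interconnection terms: the manifold-slope feedback $-\tfrac{d\vec{i}_s}{d\omega_\Delta}\dot\omega_\Delta=O(\epsilon)$ that perturbs $\dot{\vec{e}}$, and the off-manifold perturbation of the reactive-power error that is $O(\|\vec{e}\|)$ in $\dot\omega_\Delta$. Because the fast subsystem is linear with a constant $P$, and because $\vec{i}_s(\omega_\Delta)$ and $d\vec{i}_s/d\omega_\Delta$ are bounded and vanish as $|\omega_\Delta|\to\infty$, these cross terms admit global (not merely local) Young-type bounds, so the standard composite estimate $\dot V\le -(\text{a positive-definite form})$ holds on the entire state space once $0<\epsilon<\epsilon_s$, yielding global asymptotic stability. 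The principal obstacle is precisely this upgrade from the textbook semiglobal singular-perturbation conclusion to a genuinely global one; the enabling facts are the $\omega_\Delta$-independence of the fast spectrum and the decay of the manifold data at infinity. Finally, I would note that the apparent singularity of the vector field at $\vec{i}=0$ is harmless, since the affine drift $k_p u_{g,\varphi}\neq 0$ there prevents trajectories from reaching the origin, so the conclusion is global on the physically meaningful region $i>0$.
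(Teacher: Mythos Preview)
Your proposal is correct and follows essentially the same route as the paper: a singular-perturbation decomposition with $\omega_\Delta$ slow and $\vec{i}_{\alpha\beta}$ fast, the observation that the fast matrix $A(\omega_\Delta)$ has $\omega_\Delta$-independent symmetric part (so that \eqref{eq3:conditionstability} makes it uniformly Hurwitz with a scalar $P$), the scalar reduced system $\dot\omega_\Delta=-(\text{positive})\,\omega_\Delta$, and a composite quadratic Lyapunov function with Young-type cross-term bounds as in Khalil. If anything, your attention to the global-versus-semiglobal upgrade and to the $\vec{i}=0$ singularity is more explicit than the paper's own argument.
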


The parameter conditions in~\eqref{eq3:conditionstability} is readily satisfied, thereby enabling dCVOC to achieve global asymptotic stability. When no power rotation is applied, i.e. $\varphi=\pi/2$, \eqref{eq3:conditionstability} reduces to $p_{\rm ref}>i^2_{\rm ref}R_g$, meaning the active power must exceed the line power losses. When considering a power rotation angle, this is equivalent to requiring that the “virtual active power”  $\sqrt{p^2_{\rm ref}+q^2_{\rm ref}}\cos{\left(\frac{\pi}{2}-\varphi\right)}$ exceeds the “virtual line power losses” $i^2_{\rm ref}Z_g\cos\left(\frac{\pi}{2}+\varphi_g-\varphi\right)$. Moreover, if the power rotation angle fully compensates for the network impedance angle, i.e. $\varphi=\varphi_g$, setting $\sqrt{p^2_{\rm ref}+q^2_{\rm ref}}>0$ ensures that~\eqref{eq3:conditionstability} is always satisfied. 



\subsection{The Stability Analysis Considering Current Limitation and LVRT}
\textit{1) Current limitation.} Due to the converter’s overcurrent limitation, its output current cannot exceed the predefined maximum value $i_{\max}$. Otherwise, current saturation will occur, i.e. $i=i_{\rm max}$ and $\dot{i}=0$.

However, unlike dVOC, dCVOC directly controls current as a state variable, thereby ensuring stable operation even under current-limiting conditions. 

\begin{proposition}[{Stability under current limitation}]\label{currentsaturation}
If the parameter condition \eqref{eq3:conditionstability} for global asymptotic stability are satisfied, the system remains globally asymptotically stable even when current saturation is considered.
\end{proposition}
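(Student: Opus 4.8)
The plan is to reuse the Lyapunov function $V(\vec{i}_{\alpha\beta},\omega_\Delta)$ that certifies global asymptotic stability in Proposition~\ref{globalstability}, and to show that it remains a valid Lyapunov function once the hard current limit $i\le i_{\max}$ is enforced. The essential structural observation is that, because dCVOC treats the current magnitude as a genuine state, saturation acts \emph{only} on the magnitude channel: when $i$ reaches $i_{\max}$ it is held there, so the magnitude law $\dot i/i=k_p(p_\varphi-p^{\rm ref}_\varphi)$ in~\eqref{eq4:dCVOC} is replaced by $\dot i=0$, while the angle and integrator dynamics $\dot\delta_i,\dot\omega_\Delta$ of~\eqref{eq3:dCVOC_xy} are left unchanged. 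Saturation is active precisely on the set where the unconstrained law would increase the current, i.e. where $p_\varphi-p^{\rm ref}_\varphi>0$.

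First I would model the constrained closed loop as a switched system with an unsaturated mode (governed by~\eqref{eq3:close}) and a saturated mode ($i\equiv i_{\max}$, $\dot i\equiv 0$), noting that the state, and hence $V$, is continuous across the switching surface $\{i=i_{\max}\}$. Second, I would compare the Lyapunov derivative in the two modes at a common point on that surface. Since the two vector fields differ only in the $\dot i$ component,
\[
\dot V_{\rm sat}-\dot V_{\rm unsat}=-\frac{\partial V}{\partial i}\,\dot i_{\rm unsat}=-\frac{\partial V}{\partial i}\,k_p\,i\,(p_\varphi-p^{\rm ref}_\varphi)\,.
\]
On the saturated set we have $i=i_{\max}$ and $p_\varphi-p^{\rm ref}_\varphi>0$, so the sign of this difference is fixed by the sign of $\partial V/\partial i$ at $i=i_{\max}$. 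The property I need to extract from the proof of Proposition~\ref{globalstability} is that $V$ is radially increasing in the magnitude beyond equilibrium, $\partial V/\partial i>0$ for $i>i_s$. For the natural VOC-type candidate $V=\tfrac12\|\vec{i}_{\alpha\beta}-\vec{i}_{\alpha\beta,s}\|^2+\tfrac12 c\,\omega_\Delta^2$ this is immediate, since $\partial(\tfrac12\|\vec i-\vec i_s\|^2)/\partial i=i-i_s\cos(\delta_i-\delta_{i,s})\ge i-i_s$, which is strictly positive whenever $i=i_{\max}>i_s$; recalling $i_s=u_si_{\rm ref}\le i_{\rm ref}$ from~\eqref{eq3:equiPQ}, the design choice $i_{\rm ref}\le i_{\max}$ guarantees $i_s\le i_{\max}$. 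Hence $\dot V_{\rm sat}<\dot V_{\rm unsat}\le 0$ throughout the saturated region.

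With this comparison in hand, I would assemble the global statement exactly as in Proposition~\ref{globalstability}. Since $\dot V\le 0$ in the unsaturated mode and $\dot V<0$ strictly in the saturated mode, $V$ serves as a common, piecewise-smooth Lyapunov function; on the switching surface the Filippov solutions inherit $\dot V<0$, being a convex combination of the one-sided derivatives. Applying the same LaSalle-type invariance argument as before, the largest invariant set contained in $\{\dot V=0\}$ can only lie in the unsaturated region and reduces to the equilibrium $(\vec{i}_{\alpha\beta,s},0)$ — the saturated region contains no equilibrium because $i_{\max}>i_s$. This yields global asymptotic stability of the constrained system under the \emph{same} condition~\eqref{eq3:conditionstability}, which is the desired conclusion.

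The main obstacle I anticipate is not the comparison inequality itself but its rigorous nonsmooth justification: verifying $\partial V/\partial i>0$ for the \emph{exact} Lyapunov function used in Proposition~\ref{globalstability} rather than the idealized quadratic above, and applying the invariance principle correctly to the discontinuous (Filippov) dynamics at the saturation boundary. In particular, I would need to confirm that the strict decrease $\dot V<0$ already rules out any persistent sliding motion on $\{i=i_{\max}\}$, so that the trajectory necessarily leaves saturation and settles at the unsaturated equilibrium.
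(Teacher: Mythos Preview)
Your approach is in the same spirit as the paper's: reuse the Lyapunov function from Proposition~\ref{globalstability} and verify that $\dot V<0$ survives the constraint $i=i_{\max},\ \dot i=0$. The paper's proof is, however, far terser than yours---it is literally a single sentence asserting that upon substituting $i=i_{\max},\ \dot i=0$ into the appendix computation, ``the condition $\dot V<0$ still holds,'' with no comparison argument, no switched-system modeling, and no Filippov/LaSalle discussion. Your explicit mechanism (comparing $\dot V_{\rm sat}$ with $\dot V_{\rm unsat}$ via the sign of $\partial V/\partial i$ on the boundary) is a genuine addition of rigor that the paper omits.

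One point of caution: the Lyapunov function actually used in the appendix is not your idealized quadratic $\tfrac12\|\vec i-\vec i_s\|^2+\tfrac{c}{2}\omega_\Delta^2$ but the singular-perturbation candidate
\[
V=\tfrac{k_p}{2k_{\rm plli}}\,x^2+\tfrac{k_p\epsilon}{2k_{\rm plli}}\,\|y\|^2,\qquad y=i_{\alpha\beta}-z_0(x)u_{g,\varphi},
\]
with $z_0$ depending on $x=\omega_\Delta$. Your monotonicity check $\partial V/\partial i>0$ at $i=i_{\max}$ therefore requires $\|z_0(x)u_{g,\varphi}\|<i_{\max}$, which is \emph{not} guaranteed for all $x$ (indeed $\|z_0 u_{g,\varphi}\|=u_g/\sqrt{a^2+(b+x)^2}$ can exceed $i_{\max}$ near $x=-b$ if $|a|$ is small). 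So the obstacle you anticipate is real, and the paper does not resolve it either---it simply asserts the conclusion. If you want a fully rigorous version, you will need either to bound $\|z_0 u_{g,\varphi}\|$ using~\eqref{eq3:conditionstability} and the design constraint on $i_{\max}$, or to work directly with the saturated $\dot V_2$ rather than through the comparison inequality.
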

 \begin{proof}
Under current saturation, i.e., $i=i_{\max}$, $\dot{i}=0$, substituting it into the proof process in Appendix~\ref{globalstability}, the condition $\dot{V}<0$ still holds, thus ensuring global asymptotic stability.

\end{proof}

\textit{2) LVRT. }According to the grid code, during LVRT events, GFL converters are typically required to inject reactive power in proportion to the voltage sag. dCVOC can directly comply with the grid code by following the prescribed $q_{\rm ref}$
\begin{equation} \label{eq3:LVRTdCVOC}
\begin{aligned}
p_{\rm ref}&=\max\left(k_sp_{\min},\sqrt{(k_si_{\rm max})^2-q^2_{\rm ref}}\right)\,,{\rm if}\,u\leq 0.9 {\rm pu}\,,\\
q_{\rm ref}&=k_s\min \left(k_l\left(0.9-u\right),\sqrt{i^2_{\rm max}-p^2_{\min}}\right)\,,{\rm if}\,u\leq 0.9 {\rm pu}\,,
\end{aligned}
\end{equation}
where $p_{\min}$ is the minimum active power reference that guarantees the condition in Eq.~\eqref{eq3:conditionstability}. If the network impedance angle $\varphi_g$ can be accurately estimated such that $\varphi=\varphi_g$, then $p_{\min}$ can be set to zero. 

From the Eq.~\eqref{eq3:equiPQ}, it can be seen that dCVOC reduces the current according to the voltage drop. To fully exploit the reactive current support capability, the gain $k_s$ can be approximately chosen as a constant inversely proportional to the voltage, i.e., $k_s \approx 1/u$, to ensure that the reactive power during LVRT complies with grid code requirements. Therefore, by specifying the dCVOC references according to Eq.~\eqref{eq3:LVRTdCVOC} during LVRT, both stability and the grid code requirements can be simultaneously ensured. 

In summary, dCVOC ensures the existence of an equilibrium and guarantees global stability without requiring detailed grid information, while simultaneously satisfying LVRT and current-limiting constraints. 

\section{Simulation results}
The stability of the dCVOC is verified through time-domain simulations in MATLAB/Simulink based on the switching-circuit model in Fig.~\ref{control}. We set $k_p=20$ and $k_{\rm plli}=20$ at per-unit, resulting in a small constant $\epsilon=0.05$. The current reference is set as $i_{\rm ref}=\sqrt{p^2_{\rm ref}+q^2_{\rm ref}}$. In steady state, set $p_{\rm ref}=1$pu and $q_{\rm ref}=0$. The grid nominal frequency is $\omega_0=50{\rm Hz}$ and two grid impedance cases are considered—a large magnitude $Z_g$ and a low $R_g/L_g$ ratio (corresponding to a high-voltage level transmission line), as well as a small magnitude $Z_g$ and a high  $R_g/L_g$ ratio  (corresponding to a low-voltage level transmission line)—to demonstrate the stability of dCVOC under different grid conditions.

In all simulation cases, the grid voltage $u_g$ is set to drop at $t=1s$ and recover at $t = 2s$. In Case 1, the line impedance is set to $L_g=0.65$pu (${\rm SCR}\approx1.5$), $R_g=0.05$pu, where Case 1.1 corresponds to a grid voltage drop to 0.8pu and Case 1.2 to 0.2 pu. In Case 2, the line impedance is set to $L_g=0.25$pu, $R_g=0.2$pu, where Case 2.1 corresponds to a grid voltage drop to 0.8pu and Case 2.2 to 0.2 pu.

For comparison, we first show the behavior of a conventional GFL coverter as in Table~\ref{Table-gflgfm}. As shown in Fig.~\ref{time} (a), (b), the conventional GFL converter exhibits pronounced instability under weak grid conditions. During severe voltage sags, it becomes unstable within the LVRT period and fails to recover normal operation after the grid voltage is restored. Under mild voltage sags, the converter still shows weakly damped oscillations that eventually converge during LVRT, and instability still occurs during voltage recovery. In Fig.~\ref{time} (c), (d), under strong grid conditions with high $R_g/L_g$ ratio, the conventional GFL converter demonstrates significantly better stability: even though it may lose stability during LVRT due to the absence of an equilibrium point, it can still regain stable operation once the grid voltage recovers.

Compared with the conventional GFL converter, our proposed dCVOC exhibits markedly superior performance. As shown in Fig.~\ref{time}, dCVOC ensures the existence of an equilibrium point, maintains stable operation during LVRT and smoothly returns to the steady-state equilibrium after grid recovery—regardless of grid strength or fault severity. Furthermore, during LVRT, dCVOC also provides reactive power support proportional to the voltage sag according to the grid code, thereby contributing to grid voltage recovery.

\begin{figure}[h]
 \vspace{-2mm}
\centerline{\includegraphics[width=0.9\linewidth]{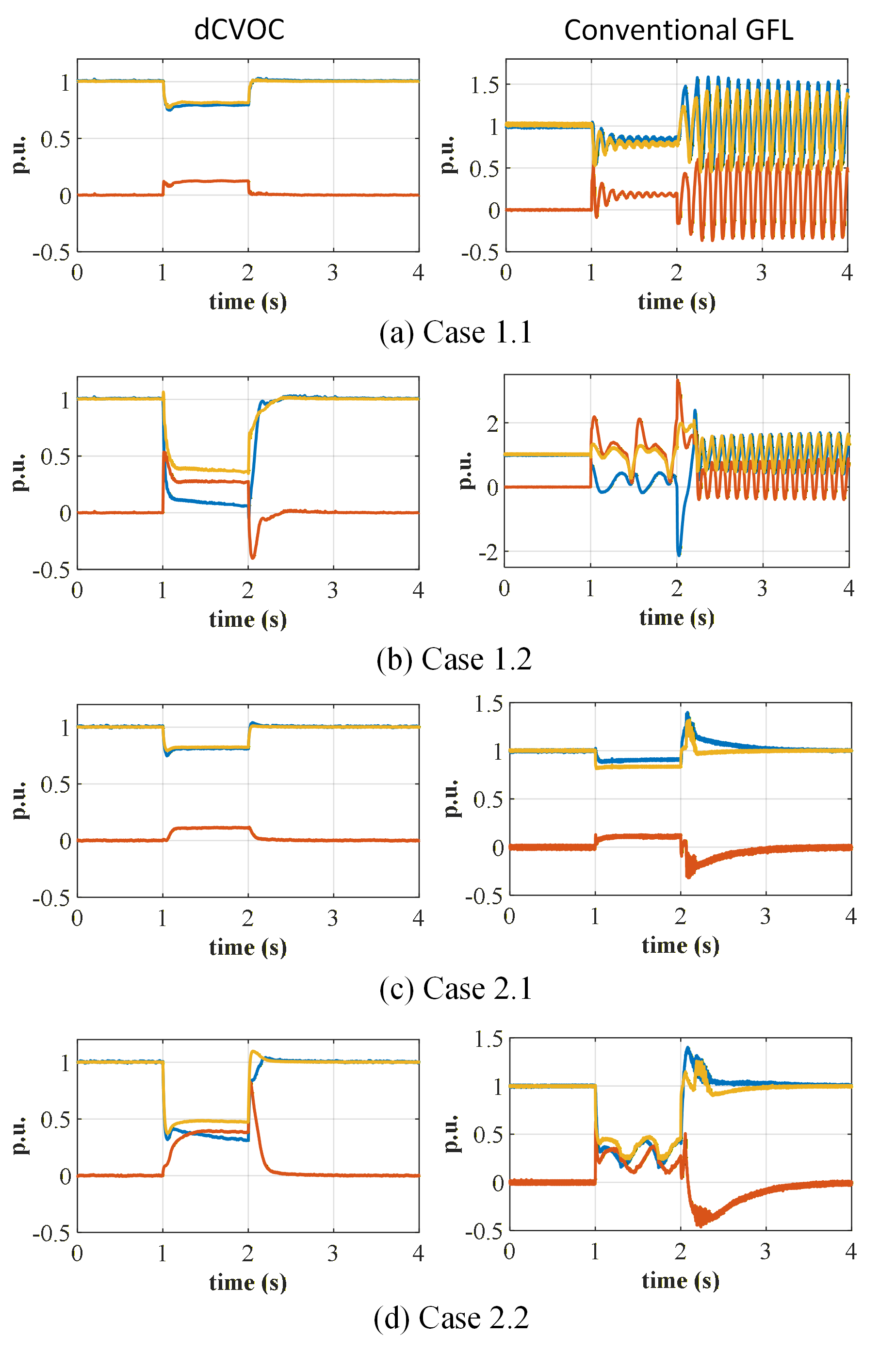}}
        \caption{Time domain responses of our proposed dCVOC and a conventional GFL.   \textcolor[HTML]{0070C0}{\textbf{---}} active power $p$;
        \textcolor[HTML]{C00000}{\textbf{---}} reactive power $q$; \textcolor[HTML]{DAA520 }{\textbf{---}}  voltage magnitude $u$;.}
        \label{time}
\end{figure}


\section{Conclusion}
This paper leverages the duality between GFM and GFL converters to extend dVOC to a GFL context, termed dCVOC. The main conclusions are as follows:

\begin{enumerate}
    \item dCVOC is fully compatible with the conventional GFL control framework and can ensure phase-locking and stable unit power factor operation while transmitting active power under steady-state conditions.
    \item Under easily satisfied explicit parameter conditions, dCVOC guarantees the existence of an equilibrium point under arbitrary grid conditions. Physically, this is because dCVOC can adjust its output power according to the voltage, ensuring that a power flow solution always exists.
    \item Moreover, based on singular perturbation theory, it is demonstrated that the dCVOC preserves global asymptotic stability even when accounting for LVRT control strategies and current saturation, provided that the proposed explicit parameter conditions are satisfied. 
\end{enumerate}

The proposed dCVOC addresses the instability issues of conventional GFL converters in weak grids, offering a promising approach for integrating large-scale power electronic devices into the grid. Future work will focus on enhancing dCVOC control and providing stability guarantees for multi-converter systems, taking DC bus voltage dynamics into account.

 {\appendices
\section{Proof of Proposition~\ref{globalstability}}
\begin{proof}
   The equilibrium point of \eqref{eq3:equlibrium} exists uniquely if and only if $e^{J\varphi_g}Z_g-{\bm S}_{\rm ref}$ is nonsingular, i.e.
\begin{equation} \label{eq3:conditionofeq}
i^2_{\rm ref}(R_g+{\rm j}L_g)\neq p_{\rm ref}+{\rm j}q_{\rm ref}
\,.
\end{equation}

For convenience in analyzing stability, we shift the coordinate system so that the equilibrium point is at the origin. Let $\tau=\frac{k_{\rm plli}}{k_p}t$, $\epsilon =k_{\rm plli}/k^2_p$, $x=\omega_\Delta-\omega_{\Delta,s}$, $z={i}_{\alpha\beta}-{i}_{\alpha\beta,s}$, $b=Z_g\sin(\frac{\pi}{2}+\varphi_g-\varphi)-q^{\rm ref}_\varphi$, $\|\|$ denotes $\ell_2$ norm, Eq.~\eqref{eq3:close} can be rewritten as
\begin{equation} \label{eq3:timescale}
\begin{cases}
 \dfrac{dx}{d\tau}=\|z+{i}_{\alpha\beta,s}\|\left({\dfrac{\Im{(\vec{u}_{g,\varphi}\vec{z}^*+\vec{u}_{g,\varphi}\vec{i}^*_{\alpha\beta,s})}}{\|z+{i}_{\alpha\beta,s}\|^2}}+b\right)\\
\epsilon \dfrac{dz}{d\tau}=\left(e^{J(\frac{\pi}{2}+\varphi_g-\varphi)}Z_g-{\bm S}^{\rm ref}_\Delta(x) \right)(z+{i}_{\alpha\beta,s})+u_{g,\varphi}\,,
\end{cases}
\end{equation}

Under the premise that an equilibrium point exists, according to singular perturbation theory~\cite{khalil_nonlinear_2002}, when $0<\epsilon<\epsilon_s$ with $\epsilon_s$ being a sufficiently small constant, the system can be decomposed into a boundary layer system and a reduced-order slow system. 

    In the boundary layer system $\epsilon\frac{dz}{d\tau}$, if $x$ is treated as a constant, the equilibrium point of boundary-layer system is given by
    \begin{equation} \label{eq3:equiofboundary}
    \begin{aligned}
        z_s&=\left({\bm S}^{\rm ref}_\Delta(x)-Z_ge^{J(\frac{\pi}{2}+\varphi_g-\varphi)} \right)^{-1}u_{g,\varphi}-i_{\alpha\beta,s}\\
        &=:z_0u_{g,\varphi}-i_{\alpha\beta,s}
        \,.
    \end{aligned}
\end{equation}
with
$$
z_0u_{g,\varphi}=e^{J(\frac{\pi}{2}-\varphi)}\frac{u_g}{a^2+(b+x)^2}\begin{bmatrix}-a\\b+x\end{bmatrix}\,,
$$
where  $a=Z_g\cos{(\frac{\pi}{2}+\varphi_g-\varphi)}-p^{\rm ref}_\varphi$, and $b=Z_g\sin(\frac{\pi}{2}+\varphi_g-\varphi)-q^{\rm ref}_\varphi$.

Let $y=z-z_s$, $\frac{dx}{d\tau}$ can be rewritten as
\begin{equation} \label{eq3:slow}
\begin{aligned}
 \dfrac{dx}{d\tau}=\|y+z_0u_{g,\varphi}\|\left({\dfrac{\Im{(\vec{u}_{g,\varphi}\vec{y}^*)+u^2_g\Im(\vec{z}^*_0)}}{\|y+z_0u_{g,\varphi}\|^2}}+b\right)=:f(x,y)
\end{aligned}
\end{equation}

If we consider the reduced slow system $\frac{dx}{d\tau}$, $z=z_s$ is treated as the quasi-steady state solution of the boundary layer subsystem. The reduced-order slow system can thus be expressed as
\begin{equation} \label{eq3:reducedsystem}
\begin{aligned}
    \dfrac{dx}{d\tau}&=\|z_0u_{g,\varphi}\|\left(\dfrac{u^2_g\Im(\vec{z}^*_0)}{\|z_0u_{g,\varphi}\|^2}+b\right)=-\|z_0u_{g,\varphi}\|x\\
&=-\dfrac{u_gx}{a^2+(b+x^2)}=:f_s(x)\,.
\end{aligned}
\end{equation}

The boundary-layer system can be rewritten as
\begin{equation} \label{eq3:boundarysystem}
\epsilon \dfrac{dy}{d\tau}=\left(e^{J(\pi/2+\varphi_g-\varphi)}Z_g-{\bm S}^{\rm ref}_\Delta(x)\right)y-\epsilon\dfrac{\partial z_s}{\partial x}f(x,y)\,.
\end{equation}
where
\begin{equation} \label{eq3:ztox}
\begin{aligned}
    \dfrac{\partial z_s}{\partial x}=\dfrac{\partial z_0u_{g,\varphi}}{\partial x}=-\dfrac{u_ge^{J(\frac{\pi}{2}-\varphi)}}{\left(a^2+(b+x)^2\right)^2}\begin{bmatrix} 2a(b+x) \\ a^2-(b+x)^2\end{bmatrix}\,,
\end{aligned}
\end{equation}


The positive-definite and radially unbounded Lyapunov candidate functions, denoted as $V=V_1+V_2$, is constructed for systems of \eqref{eq3:slow} and \eqref{eq3:boundarysystem},
\begin{equation} \label{eq3:lyapunov}
V_1=\frac{k_p}{2k_{\rm plli}}x^2 ,\,\,V_2=\frac{k_p\epsilon}{2k_{\rm plli}}y^\top y.
\end{equation}


The time derivative of $V_1$ along the reduced-order system of \eqref{eq3:reducedsystem} dynamics is
\begin{equation} \label{eq3:dotlyapunov_slow}
\begin{aligned}
\dfrac{\partial V_1}{\partial x}f_s(x)=xf_s(x)=-\|z_0u_{g,\varphi}\|x^2\leq-\alpha_1 \psi_1^2(x)\,,
\end{aligned}
\end{equation}
where $\psi_1(x)=|x|$, and $0<\alpha_1 \leq \|z_0u_{g,\varphi}\|$ is a sufficiently small constant.

In \eqref{eq3:ztox} and \eqref{eq3:slow}, the following bounds hold: $\|\frac{\partial z_s}{\partial x}\|\leq u_g$, $f(x,y)\leq u_g+\|y+z_0u_{g,\varphi}\|b$. The time derivative of $V_2$ along the boundary-layer dynamics is
\begin{equation} \label{eq3:dotlyapunov_V2}
\begin{aligned}
\dot{V}_2=-\left[ p^{\rm ref}_\varphi-\Re\left(Z_ge^{j(\pi/2+\varphi_g-\varphi)}\right)\right]\|y\|^2-\epsilon y^\top\dfrac{\partial z_s}{\partial x}f(x,y)
\,.
\end{aligned}
\end{equation}

The first term $-\left[ p^{\rm ref}_\varphi-\Re\left(Z_ge^{j(\pi/2+\varphi_g-\varphi)}\right)\right]\|y\|^2\leq-\alpha_2\psi^2_2(y)$ in \eqref{eq3:dotlyapunov_V2} with $\psi_2(y)=\|y\|$. And it should satisfy $0 \leq \alpha_2\leq p^{\rm ref}_\varphi-\Re\left(Z_ge^{j(\pi/2+\varphi_g-\varphi)}\right)$, i.e. satisfy the condition of \eqref{eq3:conditionstability}. 

Because $\|\epsilon y^\top\frac{\partial z_s}{\partial x}f(x,y)\|\leq \epsilon\|y\|(u^2_g+\|y+z_0u_{g,\varphi}\|u_gb)$ remains bounded, there exists a sufficiently small constant $\epsilon$ and sufficiently larger constants $\beta_2,\gamma$ so that $\|\epsilon y^\top\dfrac{\partial z_s}{\partial x}f(x,y)\|\leq \beta_2\psi_1(x)\psi_2(y)+\gamma\psi^2_2(y)$. Furthermore, because $(f(x,y)-f_s(x))$ remains bounded, there exists a sufficiently larger constants $\beta_1$ so that $\frac{\partial V_1}{\partial x}(f(x,y)-f_s(x))\leq \beta_1\psi_1(x)\psi_2(y)$. 

Consequently, according to singular perturbation theory~\cite{khalil_nonlinear_2002}, if the above conditions hold, $\dot{V}<0$ and thus the equilibrium point of \eqref{eq3:timescale} is globally asymptotically stable. 

\end{proof}




\bibliographystyle{IEEEtran}

\bibliography{IEEEabrv,RS}

\end{document}